		\newtheoremstyle{normalstyle}
			{.5em}
			{3pt}
			{\itshape}
			{}
			{\normalfont\bfseries}
			{.\newline}
			{.5em}
			{}
		\theoremstyle{normalstyle}
		\newtheorem{thm}{Theorem}
		\newtheorem{lem}[thm]{Lemma}
		\newtheorem{definition}[thm]{Definition}
\newcommand{\Z}{\mathbb{Z}}
\newcommand{\R}{\mathbb{R}}
\newcommand{\id}{\mathds{1}}
\newcommand{\norm}[1]{\left|\left|#1\right|\right|}
\newcommand{\skp}[1]{\left\langle #1 \right\rangle}
\newcommand{\df}{\mathrel{\mathop:}=}
\title{Deciding whether a Lattice has an Orthonormal Basis is in co-NP}
\author{Christoph Hunkenschr{\"o}der\\
	    {\'E}cole Polytechnique F{\'e}d{\'e}rale de Lausanne\\
	    christoph.hunkenschroder@epfl.ch}
\begin{document}
\selectlanguage{english}

	\maketitle	


\begin{abstract}
We show that the problem of deciding whether a given Euclidean lattice $L$ has an orthonormal basis is in NP and co-NP.
Since this is equivalent to saying that $L$ is isomorphic to the standard integer lattice, this problem is a special form of the Lattice Isomorphism Problem, which is known to be in the complexity class SZK.
\end{abstract}

\section{Introduction}
Let $B \in \R^{n \times n}$ be a non-singular matrix generating a full-dimensional Euclidean lattice $\Lambda(B) = \{B z \mid z \in \Z^n\}$.
Several problems in the algorithmic theory of lattices, such as the covering radius problem, or the shortest or closest vector problem,  become easy if the columns of $B$ form an orthonormal basis.
However, if $B$ is any basis and we want to decide whether there is an orthonormal basis of $\Lambda(B)$, not much is known.
As this is equivalent to $\Lambda(B)$ being a rotation of $\Z^n$, we call this decision problem the \emph{Rotated Standard Lattice Problem} (RSLP), which is the main concern of the work at hand.
The goal of this article is twofold.
For one, we want to show that the RSLP is in NP $\cap$ co-NP.
We will use a result of Elkies on \emph{characteristic vectors}~\cite{elkies1995characterization}, which appear in analytic number theory.
It seems that characteristic vectors are rather unknown in the algorithmic lattice theory, mayhap because it is unclear how they can be used.
The second attempt of this paper is thus to introduce Elkies' result and characteristic vectors to a wider audience.

\subsubsection*{Related work}
In~\cite{lenstra2017symmetry}, Lenstra and Silverberg show that RSLP can be decided in polynomial time, provided that additional information on the automorphism group of the lattice is part of the input.
However, they do not discuss the complexity of this problem in general, though their results might allow for a co-NP certificate for the general case as well.
When the input lattice is a construction-A lattice, Chandrasekaran, Gandikota and Grigorescu show that existence of an \emph{orthogonal} basis can be decided in polynomial time~\cite{kartik2016constr-A}.
If it exists, they also find one.

The RSLP can be seen as a special case of the Lattice Isomorphism Problem,
which, given two lattices $\Lambda_1,\Lambda_2$, asks whether there is an isomorphism
$\varphi: \Lambda_1 \rightarrow \Lambda_2$ between the two lattices that preserves the Euclidean structure ($\skp{x,y} = \skp{ \varphi(x), \varphi (y)}$).
That is, $\skp{x,y} = \sum_{i=1}^n x_i y_i$ for $x,y \in \R^n$.

The Lattice Isomorphism Problem was first introduced by Plesken and Souvignier~\cite{plesken1997computing}, solving it in small dimension for specific lattices of interest.
In~\cite{dutour2009complexity}, Dutour Sikiri{\'c}, Sch{\"u}rmann and Vallentin show that this problem is at least as hard as the more famous Graph Isomorphism Problem.
The best algorithm for the Lattice Isomorphism Problem the author is aware of is due to Haviv and Regev, and has a running time of $n^{\mathcal{O} (n)}$~\cite{haviv2014lattice}.
They solve the problem by computing all orthogonal linear transformations between the two given lattices $\Lambda_1$, $\Lambda_2$.
On the complexity side, they show that the problem is in the complexity class SZK (Statistical Zero Knowledge), which already suggests that it is not NP-hard.


Sufficient background for the topic is provided in the next section, where we also introduce characteristic vectors and show some easy properties.
Then, we show that RSLP is in NP $\cap$ co-NP, and conclude with open questions.

\section{Preliminaries}
In the following, we will provide sufficient background on lattices.
For more details and proofs, we refer to the textbook of Gruber \& Lekkerkerker~\cite{gruber1993geometry}.

For linearly independent vectors $b_1,\dots,b_n \in \R^n$,
the (full-dimensional) \emph{lattice} $\Lambda \subseteq \R^n$ \emph{generated by} $b_1,\dots,b_n$ is the set
\[
\Lambda = \left\{\sum_{i=1}^n \alpha_i b_i \mid \forall i \in \{1,\dots,n\} : \, \alpha_i \in \Z \right\}.
\]
The matrix $B = (b_1,\dots,b_n)$ is called a \emph{basis} of $\Lambda$.
It is known that two bases $B_1,B_2$ generate the same lattice, if and only if there exists a unimodular matrix $U \in \Z^{n \times n}$ such that $B_1 = B_2 U$.
Two lattices $\Lambda_1, \Lambda_2$ are \emph{isomorphic}, in symbols $\Lambda_1 \cong \Lambda_2$, if there exists an orthogonal matrix $Q \in \R^{n \times n}$ such that $\Lambda_1 = Q \Lambda_2$.
In this case, it follows that $B_2$ is a basis of $\Lambda_2$ if and only if $B_1 \df QB_2$ is a basis of $\Lambda_1$.
%

We are interested in the following problem, which is a special case of the Lattice Isomorphism Problem.

\vspace*{5pt} 
\noindent\fbox{
\parbox{0.97\textwidth}{
\textbf{Rotated Standard Lattice Problem}\\
\begin{tabular}{rl}
\textsc{Instance:} & A lattice $\Lambda \subseteq \R^n$, given by a basis $B \in \R^{n \times n}$. \\
\textsc{Task:} & Decide whether $\Lambda \cong \Z^n$.
\end{tabular}
}}
\vspace*{5pt}

The attentive reader might notice that an isomorphism refers to an orthogonal matrix $Q$, whereas a rotation usually refers to an orthogonal matrix with positive determinant, $\det (Q) = 1$.
However, in our setting, i.e.\ one of the lattices in consideration is fixed to $\Z^n$,
the terms turn out to be equivalent for the following reason.
A matrix $B$ generates $\Z^n$, if and only if $B$ is unimodular.
Given an isomorphic lattice with basis $QB$, we can multiply the first column of $Q$ and the first row of $B$ by $-1$, without changing the basis $QB$.
This flips the sign of $\det(Q)$, while $B$ remains unimodular.
Hence, $\Lambda(QB)$ is indeed a rotation of $\Z^n$.

Though a lattice is usually specified by a basis matrix $B$, we will see that another representation is preferable for this problem.

The \emph{Gram matrix} $G$ of a basis $B$ is defined as $G = B^\intercal B$, i.e.\ $G_{i,j} = \skp{b_i,b_j}$.
An advantage of the Gram matrix is that it ``forgets'' the embedding of a lattice $\Lambda$ into the Euclidean space, and only carries the information of the isomorphism class of $\Lambda$, as implied by the following lemma.

\begin{lem}
\label{lem:gram-iso}
Two bases $B_1,B_2 \in \R^{n \times n}$ generate isomorphic lattices $\Lambda(B_1),\Lambda(B_2) \subseteq \R^n$, if and only if there exists a unimodular matrix $U \in \Z^{n \times n}$ such that for the corresponding Gram matrices $G_1,G_2$ the relation
$G_1 = U^\intercal G_2 U$ holds.
\end{lem}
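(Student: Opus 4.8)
The plan is to prove both implications directly, leaning on the two elementary facts recalled just above the lemma: that two bases generate the same lattice exactly when they differ by a unimodular matrix, and that an orthogonal $Q$ sends $\Lambda(B)$ to $\Lambda(QB)$. The only extra ingredient is the observation that two nonsingular square matrices have the same Gram matrix precisely when they differ on the left by an orthogonal matrix; I would isolate this as the technical core, even though it is a one-line computation.

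For the ``only if'' direction, assume $\Lambda(B_1) \cong \Lambda(B_2)$, so there is an orthogonal $Q$ with $\Lambda(B_1) = Q\Lambda(B_2) = \Lambda(QB_2)$. Since $B_1$ and $QB_2$ are then two bases of the same lattice, there is a unimodular $U \in \Z^{n\times n}$ with $B_1 = QB_2U$. Substituting into $G_1 = B_1^\intercal B_1$ and using $Q^\intercal Q = \id$ yields $G_1 = U^\intercal B_2^\intercal Q^\intercal Q B_2 U = U^\intercal G_2 U$.

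For the ``if'' direction, suppose $G_1 = U^\intercal G_2 U$ for a unimodular $U$. Rewrite the right-hand side as $(B_2U)^\intercal(B_2U)$, so $B_1$ and $B_2U$ share the same Gram matrix. Now set $Q \df B_1 (B_2U)^{-1}$, which is well defined since every basis here is nonsingular; a direct computation gives $Q^\intercal Q = (B_2U)^{-\intercal} B_1^\intercal B_1 (B_2U)^{-1} = (B_2U)^{-\intercal}(B_2U)^\intercal(B_2U)(B_2U)^{-1} = \id$, so $Q$ is orthogonal and $B_1 = Q(B_2U)$. Hence $\Lambda(B_1) = Q\Lambda(B_2U) = Q\Lambda(B_2)$, the last equality because $U$ is unimodular, and therefore $\Lambda(B_1) \cong \Lambda(B_2)$.

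I do not expect a genuine obstacle: the statement is essentially a bookkeeping exercise combining the unimodular equivalence of bases with the orthogonal equivalence of ``square roots'' of a positive definite matrix. The one point that deserves care is applying the same-Gram-matrix step to the correct pair, namely $B_1$ versus $B_2U$ rather than $B_1$ versus $B_2$, and checking that the inverses used exist, which holds because all bases in the statement are nonsingular square matrices.
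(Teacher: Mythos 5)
Your proof is correct and follows essentially the same route as the paper: the same use of unimodular base change for the ``only if'' direction, and the same orthogonal matrix $Q = B_1 U^{-1} B_2^{-1}$ (written as $B_1(B_2U)^{-1}$) verified to satisfy $Q^\intercal Q = \id$ for the converse. The only cosmetic difference is that you substitute $G_1 = (B_2U)^\intercal(B_2U)$ directly, whereas the paper simplifies via $U^{-\intercal}G_1U^{-1} = G_2$; these are the same computation.
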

\begin{proof}
If there is an isomorphism $\Lambda(B_1) = Q \Lambda(B_2)$, then $QB_2$ is a basis of $\Lambda_1$.
This implies that there is a unimodular matrix $U \in \Z^{n \times n}$ such that $B_1 = QB_2 U$, and we obtain $G_1 = U^\intercal G_2 U$.

On the other hand, if $G_1 = U^\intercal G_2 U$, define $Q = B_1 U^{-1} B_2^{-1}$, and verify
\[
Q^\intercal Q = B_2^{-\intercal} ( U^{-\intercal}  B_1^\intercal B_1 U^{-1}) B_2^{-1} = B_2^{-\intercal} G_2 B_2^{-1} = \id.
\]
Hence, we find $Q \Lambda(B_2) = \Lambda(QB_2) = \Lambda(B_1 U^{-1}) = \Lambda(B_1)$, since $U^{-1}$ is again unimodular.
\end{proof}
Clearly, a Gram matrix is always symmetric and positive definite.
Since every symmetric and positive definite matrix has a Cholesky decomposition, we can consider every symmetric and positive definite matrix as a Gram matrix of some lattice basis.
Moreover, the lattice $\Z^n$ is generated by the identity, whose Gram matrix is in addition unimodular.
This grants a reduction to the following problem.

\vspace*{5pt} 
\noindent\fbox{
\parbox{0.97\textwidth}{
\textbf{Unimodular Decomposition Problem (UDP)}\\
\begin{tabular}{rl}
\textsc{Instance:} & A symmetric, positive definite, unimodular matrix $G \in \Z^{n \times n}$. \\
\textsc{Task:} & Decide whether there exists a unimodular matrix $U \in \Z^{n \times n}$ with $G = U^\intercal U$.
\end{tabular}
}}
\vspace*{5pt}

Clearly, UDP is in NP.
A certificate is simply given by the matrix $U$,
whose entries are bounded by $\max \{ \sqrt{G_{ii}} \mid 1 \leq i \leq n\}$.
In the following, we will discuss a paper of Elkies~\cite{elkies1995characterization} and apply his results to show that the problem is also in co-NP.

To this end, let us recall some terminology for a full-dimensional lattice $\Lambda \subseteq \R^n$.
The dual of $\Lambda$, denoted by $\Lambda^\star$, is defined as
$\Lambda^\star = \{y \in \R^n \mid \forall \, x \in \Lambda: \, y^\intercal x \in \Z \}$.
We call $\Lambda$ \emph{self-dual} if $\Lambda = \Lambda^\star$.
Self-dual lattices are also called unimodular lattices for the following reason.
\begin{lem}
A matrix $B \in \R^{n \times n}$ generates a self-dual lattice $\Lambda(B)$ if and only if the corresponding Gram matrix $G = B^\intercal B$ is unimodular.
\end{lem}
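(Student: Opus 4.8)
The plan is to identify the dual lattice explicitly in terms of the basis matrix, and then invoke the characterization of bases generating the same lattice that was recalled in the preliminaries. First I would observe that since $B$ is non-singular, a vector $y \in \R^n$ lies in $\Lambda(B)^\star$ precisely when $y^\intercal (Bz) \in \Z$ for every $z \in \Z^n$, which is equivalent to $B^\intercal y \in \Z^n$, i.e.\ to $y \in B^{-\intercal}\Z^n$, where $B^{-\intercal} \df (B^\intercal)^{-1}$. Hence $\Lambda(B)^\star = \Lambda(B^{-\intercal})$, and in particular $B^{-\intercal}$ is itself a basis matrix of the dual lattice.

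Next I would apply the fact that two bases generate the same lattice if and only if they differ by a unimodular matrix, to the pair $B$ and $B^{-\intercal}$. Thus $\Lambda(B)$ is self-dual, i.e.\ $\Lambda(B) = \Lambda(B^{-\intercal})$, if and only if there is a unimodular $U \in \Z^{n\times n}$ with $B = B^{-\intercal} U$. Rearranging this identity gives $U = B^\intercal B = G$. Consequently, self-duality of $\Lambda(B)$ forces $G = U$ to be an integer matrix of determinant $\pm 1$, that is, unimodular; this settles one direction.

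For the converse, suppose $G = B^\intercal B$ is unimodular. Then from $B = B^{-\intercal}(B^\intercal B) = B^{-\intercal} G$ together with the unimodularity of $G$ we obtain $\Lambda(B) = \Lambda(B^{-\intercal} G) = \Lambda(B^{-\intercal}) = \Lambda(B)^\star$, so $\Lambda(B)$ is self-dual.

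I do not expect a genuine obstacle. The only points requiring a little care are the equivalence ``$y^\intercal(Bz) \in \Z$ for all $z \in \Z^n$ $\iff$ $B^\intercal y \in \Z^n$'', which just uses that it suffices to test integrality against the standard basis vectors of $\Z^n$, and the bookkeeping that ``unimodular'' here means an \emph{integer} matrix of determinant $\pm 1$, so in the forward direction both the integrality of $G$ and $\det G = \pm 1$ must be extracted --- but both of these drop out at once from the identity $G = U$.
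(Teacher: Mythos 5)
Your proof is correct. The forward direction is essentially the paper's: both identify $B^{-\intercal}$ as a basis of $\Lambda(B)^\star$ and read off $G = B^\intercal B$ as the unimodular change-of-basis matrix between $B$ and $B^{-\intercal}$. Where you diverge is the converse. You reuse the same change-of-basis characterization a second time, writing $B = B^{-\intercal}G$ and concluding $\Lambda(B) = \Lambda(B^{-\intercal}) = \Lambda(B)^\star$ directly from unimodularity of $G$; this is clean and symmetric with the forward direction. The paper instead argues in two steps: integrality of $G$ gives $x^\intercal y = z_1^\intercal G z_2 \in \Z$ for all lattice vectors, hence the inclusion $\Lambda(B) \subseteq \Lambda(B)^\star$, and then $\det(\Lambda(B))^2 = \det(G) = 1$ forces equality. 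The paper's route is marginally less economical but yields as a byproduct the observation, used immediately afterwards, that all pairwise scalar products in a self-dual lattice are integers; your route gets the equivalence slightly faster but splits that observation off as a separate (easy) remark. Both arguments are complete, and your care about the equivalence $y^\intercal Bz \in \Z \ \forall z \in \Z^n \iff B^\intercal y \in \Z^n$ is exactly the right point to flag.
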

\begin{proof}
Let $B$ be the basis of a self-dual lattice.
Then $B^{-\intercal }$ is a basis as well, and there exists a unimodular matrix $G$ such that $B^{-\intercal} G = B$, which is equivalent to $G = B^\intercal B$.

Let $G = B^\intercal B$ be unimodular, and $x = Bz_1$ and $y= Bz_2$ be any two lattice vectors.
Since $x^\intercal y = z_1^\intercal G z_2$, we have $\Lambda(B) \subseteq \Lambda(B)^\star$.
Since $\det (\Lambda(B))^2 = \det (B)^2 = \det (G) = 1$, they have to be the same already.
\end{proof}
This also implies that in a self-dual lattice, the scalar product of any two vectors is an integer.

\begin{definition}
A vector $w \in \Lambda$ of a self-dual lattice $\Lambda$ is said to be characteristic, if
\[
\forall \, v \in \Lambda: \quad \skp{v,w} \equiv \skp{v,v} \!\mod 2.
\]
For $x \equiv y \!\mod k$, we will also write $x \equiv_k y$ for short.
\end{definition}

It is known that for dimensions $n \leq 7$, the lattice $\Z^n$ is the unique self-dual lattice (up to isomorphism)~\cite{conwaysloane1999splag}.
In dimension $8$, the lattice
\[
E_8 = \left\lbrace z \in \R^8 \left\vert \sum_{i=1}^8 z_i \equiv_2 0, z \in \Z^8 \cup \left( \frac12 \mathbf{1} + \Z^8 \right) \right. \right\rbrace
\]
is self-dual, but not isomorphic to $\Z^8$.
Here $\mathbf{1} = (1,\dots,1)^\intercal$ denotes the all-one vector.

Before we turn our attention to the main result, let us show some basic properties of characteristic vectors.

\newpage
\begin{lem}
The following are true for every self-dual lattice $\Lambda = \Lambda^\star \subseteq \R^n$.
\begin{enumerate}[\indent $i)$]
\item There exists a characteristic vector $w \in \Lambda$.
\item The set of characteristic vectors is precisely a co-set $w + 2\Lambda$, where $w \in \Lambda$ is any characteristic vector.
\item For any two characteristic vectors $u,w \in \Lambda$, we have $\norm{u}^2 - \norm{w}^2 \equiv_8 0$.
\item If we are given a Gram matrix $G$ of $\Lambda$, we can compute a vector $z \in \Z^n$ such that for all $y \in \Z^n$, we have $y^\intercal G z \equiv_2 y^\intercal G y$ in polynomial time.
\item The shortest characteristic vectors of the lattice $\Z^n$ are the vectors $\{-1,1\}^n$.
\end{enumerate}
\end{lem}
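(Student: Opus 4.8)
The plan is to deduce all five claims from one structural fact: because $\Lambda$ is self-dual, reduction modulo $2$ turns the inner product into a nondegenerate $\Fds_2$-bilinear form on the $\Fds_2$-vector space $\Lambda/2\Lambda$, while the squared norm becomes an $\Fds_2$-linear functional on that same space. Indeed, from $\skp{u+v,u+v} = \skp{u,u} + 2\skp{u,v} + \skp{v,v}$ one sees that $q\colon v \mapsto \skp{v,v}\bmod 2$ is additive and factors through $\Lambda/2\Lambda$, and for fixed $w$ the map $v \mapsto \skp{v,w}\bmod 2$ is additive as well. Self-duality says precisely that $w \mapsto \skp{\,\cdot\,,w}$ is an isomorphism $\Lambda \xrightarrow{\sim} \operatorname{Hom}(\Lambda,\Z)$; tensoring with $\Fds_2$ shows the induced pairing on $\Lambda/2\Lambda$ is perfect. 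Hence the functional $q$ is represented by some $w \in \Lambda$, i.e.\ $\skp{v,w} \equiv_2 \skp{v,v}$ for all $v \in \Lambda$, which is a characteristic vector; this proves $i)$.

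For $iv)$ I would make this explicit on a Gram matrix $G$, which is integral, symmetric, positive definite with $\det G = 1$, so that $G^{-1}$ is again an integer matrix. Since $y^\intercal G y \equiv_2 \sum_i G_{ii}y_i^2 \equiv_2 \sum_i G_{ii}y_i$, a vector $z \in \Z^n$ with $y^\intercal Gz \equiv_2 y^\intercal G y$ for all $y$ is exactly a solution of $Gz \equiv \operatorname{diag}(G) \pmod 2$, and $z \df G^{-1}\operatorname{diag}(G)$, reduced entrywise mod $2$, is such a solution; it is computable in polynomial time and has entries in $\{0,1\}$. This also re-proves $i)$ constructively via Cholesky.

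Claims $ii)$ and $iii)$ are then short. If $w$ is characteristic and $x \in \Lambda$, then $\skp{v,w+2x} = \skp{v,w} + 2\skp{v,x} \equiv_2 \skp{v,v}$, so every element of $w + 2\Lambda$ is characteristic; conversely, if $w,w'$ are both characteristic, then $\skp{v,w-w'}$ is an even integer for every $v \in \Lambda$, hence $\tfrac12(w-w') \in \Lambda^\star = \Lambda$ and $w - w' \in 2\Lambda$, proving $ii)$. For $iii)$, write $u = w + 2x$ with $x \in \Lambda$ using $ii)$; then $\norm{u}^2 - \norm{w}^2 = 4\bigl(\skp{w,x} + \norm{x}^2\bigr)$, and since $w$ is characteristic $\skp{w,x} \equiv_2 \skp{x,x} = \norm{x}^2$, so the bracketed term is even and the difference is divisible by $8$. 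Finally, for $v)$: testing the characteristic condition against $e_1,\dots,e_n$ shows $w \in \Z^n$ is characteristic iff every coordinate $w_i$ is odd; among all-odd vectors, $\norm{w}^2 = \sum_i w_i^2$ is minimized exactly when every $\lvert w_i\rvert = 1$ (value $n$), since any other characteristic vector has some $\lvert w_i\rvert \ge 3$ and hence $\norm{w}^2 \ge n+8$. Thus the shortest characteristic vectors of $\Z^n$ are precisely $\{-1,1\}^n$.

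I do not expect a genuine obstacle. The only points requiring a bit of care are making $i)$ and $iv)$ honestly algorithmic — bounding the bit-size of $G^{-1}$ (via Hadamard's inequality) and of the computed $z$ — and being precise that the mod-$2$ reduction of the Gram form is nondegenerate exactly because the lattice is self-dual, rather than merely integral.
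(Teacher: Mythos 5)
Your proof is correct, and in two places it takes a genuinely different (and cleaner) route than the paper. For $i)$ the paper writes down an explicit characteristic vector $w=\sum_i \norm{d_i}^2 b_i$ in terms of the dual basis $D=B^{-\intercal}$ and verifies the congruence directly using $x^2\equiv_2 x$; you instead observe that self-duality makes the mod-$2$ reduction of the inner product a perfect pairing on $\Lambda/2\Lambda$ and that $v\mapsto\skp{v,v}\bmod 2$ is an $\Fds_2$-linear functional, so a representing vector exists abstractly. The paper's formula is exactly what your argument produces once unwound: $\norm{d_k}^2=(G^{-1})_{kk}$ is the paper's choice of $z$ in $iv)$, versus your $z=G^{-1}\operatorname{diag}(G)$ --- both solve $Gz\equiv\operatorname{diag}(G)\bmod 2$, and they agree modulo $2\Z^n$ as $ii)$ forces. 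For the converse half of $ii)$ the paper again computes coefficients ($\gamma_k\equiv_2\norm{d_k}^2$), whereas you note that $\skp{v,w-w'}$ is even for every $v\in\Lambda$, hence $\tfrac12(w-w')\in\Lambda^\star=\Lambda$; this is slicker and invokes self-duality exactly where it is needed. Your $iii)$ is the paper's computation done in one step for $u=w+2x$ rather than by stepping through basis vectors. One small nitpick: in $v)$, testing against $e_1,\dots,e_n$ only shows that a characteristic vector of $\Z^n$ must have all coordinates odd; for the stated conclusion you also need that the vectors in $\{-1,1\}^n$ really are characteristic, which is immediate from $\skp{v,\mathbf{1}}=\sum_i v_i\equiv_2\sum_i v_i^2=\skp{v,v}$ together with $ii)$ --- the paper elides this direction as well.
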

\begin{proof}
\leavevmode
\begin{enumerate}[\indent $i)$]
\item Let $B = (b_1,\dots,b_n)$ be a basis of $\Lambda$, and $D =  (d_1,\dots,d_n) = B^{- \intercal}$ the corresponding dual basis, also spanning $\Lambda$.
Represented in the primal basis, define a vector $w = \sum_{i=1}^n \norm{d_i}^2 b_i \in \Lambda$, and let $v = \sum_{i=1}^n \alpha_i d_i \in \Lambda$ be any lattice vector, represented in the dual basis $D$.
Using $x^2 \equiv_2 x$ for $x \in \Z$, we obtain
\begin{align*}
\skp{v,v}
	&= \sum_{i=1}^n \sum_{j=1}^n \alpha_i \alpha_j d_i^\intercal d_j
	\equiv_2 \sum_{i=1}^n \alpha_i^2 \norm{d_i}^2 \\
	&\equiv_2 \sum_{i=1}^n \alpha_i \norm{d_i}^2
	= \sum_{i=1}^n \sum_{j=1}^n \alpha_i \norm{d_j}^2 d_i^\intercal b_j
	= \skp{v,w}.
\end{align*}
Thus, $w$ is a characteristic vector and $i)$ is shown.
\item Now let $w$ be as in the first part, and $w^\prime$ be any characteristic vector.
Since for any $y \in \Lambda$, we have $\skp{ w^\prime + 2y, v}= \skp{w^\prime,v} + 2 \skp{y,v} \equiv_2 \skp{w^\prime,v}$, the whole co-set $w^\prime + 2\Lambda$ consists of characteristic vectors.

If $w^\prime$ has the representation $w^\prime = \sum_{i=1}^n \gamma_i b_i \in \Lambda$, computing
\[
\norm{d_k}^2 = \skp{d_k,d_k} \equiv_2 \skp{w^\prime,d_k} = \gamma_k
\]
for every coefficient $\gamma_k$ shows that $w^\prime \in w + 2\Lambda$, finishing the proof of point $ii)$.
\item Let $w = Bc$ be a characteristic vector.
It suffices to show that we have $\norm{w}^2 - \norm{w + 2b_k}^2 \equiv_8 0$ for $k=1,\dots,n$.
The claim then follows by repeatedly adding or subtracting twice a basis vector.
Let $G = B^\intercal B$, and compute
\[
\norm{w + 2 b_k}^2
	= (c + 2 e_k)^\intercal G (c + 2e_k)
	= c^\intercal G c + 4 \underbrace{c^\intercal G e_k}_{\equiv_2 e_k^\intercal G e_k} + 4 e_k^\intercal G e_k
	\equiv_8 \norm{w}^2 + 8 G_{kk},
\]
where the emphasized equivalence follows from $w$ being a characteristic vector.
Since $G_{kk} \in \Z$, we are done.
\item Observe that $G^{-1} = D^\intercal D$;
hence, by the definition of $w^\prime$, we can compute $G^{-1}$ and set $z_k = (G^{-1})_{kk}$ for $k = 1,\dots,n$.
\item Choosing the identity matrix as lattice basis, it follows that all characteristic vectors of $\Z^n$ must have odd entries only, hence point $v)$ follows.
\end{enumerate}
\end{proof}

The computations carried out in part $ii)$ are already discussed in~\cite{gerstein2004characteristic}.
Point $iii)$ can also be found in~\cite[Chap. V]{serre2012course}.
Both sources are written in terms of quadratic forms.

\section{The Unimodular Decomposition Problem is in NP and co-NP}
We are now able to provide the main result of this article.
The crucial argument we are using will be 
Elkies' theorem, which reads as follows.

\begin{thm}[Elkies~\cite{elkies1995characterization}]
Let $\Lambda$ be a unimodular lattice in $\R^n$ with no characteristic vector such that $\Vert w \Vert^2 < n$.
Then $\Lambda \cong \Z^n$.
\end{thm}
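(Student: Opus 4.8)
The plan is to argue by induction on $n$, the cases $n\le 1$ being immediate (a rank-$1$ unimodular lattice is $\Z v$ with $\norm{v}^2=1$). The inductive step reduces to producing a single vector $e\in\Lambda$ with $\norm{e}^2=1$: once such an $e$ is found, self-duality of $\Lambda$ gives $\skp{e,x}\in\Z$ for every $x\in\Lambda$, so that $x-\skp{e,x}e\in\Lambda\cap e^\perp$ for all $x$, giving $\Lambda=\Z e\perp\Lambda'$ with $\Lambda'=\Lambda\cap e^\perp$ a unimodular lattice of rank $n-1$. Directly from the definition one checks that the characteristic vectors of an orthogonal sum are exactly the orthogonal sums of characteristic vectors of the two summands, and that the characteristic vectors of $\Z e$ are the odd multiples of $e$; hence the minimal squared norm of a characteristic vector is additive over orthogonal sums, so that quantity for $\Lambda'$ equals the one for $\Lambda$ minus $1$, and is therefore still at least $n-1$. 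By the induction hypothesis $\Lambda'\cong\Z^{n-1}$, whence $\Lambda\cong\Z e\perp\Z^{n-1}=\Z^n$.

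Everything thus rests on the assertion that a unimodular lattice with no characteristic vector of squared norm $<n$ must contain a vector of squared norm $1$; this is Elkies' actual contribution and the analytic heart of the matter. By part $(iii)$ of the lemma on characteristic vectors the minimal squared norm of a characteristic vector is $\equiv n\pmod 8$, and if $\Lambda$ were even then $0$ would be characteristic of squared norm $0<n$; so the hypothesis already forces $\Lambda$ to be odd (for $n\ge 1$), and together with the standard bound that the minimal characteristic squared norm of a rank-$n$ unimodular lattice is at most $n$, it forces that minimum to be exactly $n$. I would then bring in theta series. Put $\theta_\Lambda(\tau)=\sum_{v\in\Lambda}q^{\norm{v}^2}$ with $q=e^{\pi i\tau}$, and let $\theta_S$ be the theta series of the \emph{shadow} $S=\tfrac12 w+\Lambda$, where $w$ is any characteristic vector, so that $2S$ is precisely the set of characteristic vectors. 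Applying Poisson summation to the index-$2$ even sublattice $\Lambda_0\subseteq\Lambda$ and its dual yields the shadow transformation law, which expresses $\theta_S$ through $\theta_\Lambda\bigl(-1/\tau\bigr)$ up to the factor $(\tau/i)^{n/2}$ and the eighth root of unity $e^{\pi i n/4}$, and makes $(\theta_\Lambda,\theta_S)$ into a vector-valued modular form of weight $n/2$ for the theta group.

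Now I would exploit non-negativity. Write $\theta_\Lambda=1+N_1q+N_2q^2+\cdots$ with $N_k=\#\{v\in\Lambda:\norm{v}^2=k\}\ge0$, and $\theta_S=\sum_k C_kq^{k}$ with $C_k\ge0$; under the hypothesis the smallest exponent occurring in $\theta_S$ is $n/4$, contributed by the minimal characteristic vectors (of squared norm $n$). Substituting these $q$-expansions into the shadow transformation law and comparing coefficients, the simultaneous non-negativity of all $N_k$ and all $C_k$ should pin $\theta_\Lambda$ down to $\theta_3(\tau)^n=\bigl(\sum_{m\in\Z}q^{m^2}\bigr)^n$, the theta series of $\Z^n$; in particular $N_1=2n\ge2>0$, which yields the desired squared-norm-$1$ vector and closes the induction. (One can also begin more concretely: a characteristic vector $w$ of minimal norm satisfies $\skp{v,v-w}\ge0$ for all $v\in\Lambda$, since $w-2v$ is again characteristic, so the half-point $w/2$ has distance $\ge\norm{w}/2\ge\sqrt n/2$ from $\Lambda$; but turning this into a squared-norm-$1$ vector still seems to need the same modular-forms input, so I would take Elkies' route.)

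The step I expect to be the genuine obstacle is the last one: making precise, as a function of $n$ modulo $24$, how many $q$-coefficients must be matched, controlling the decomposition of weight-$n/2$ forms into Eisenstein and cusp parts, and verifying that the joint positivity of $\theta_\Lambda$ and $\theta_S$ really does force $\theta_\Lambda=\theta_3^n$ in \emph{every} dimension — not merely in the small ones where the classification of unimodular lattices would already decide it. The remaining ingredients — the inductive reduction, the additivity of the minimal characteristic norm, and the derivation of the shadow transformation law from Poisson summation — are routine.
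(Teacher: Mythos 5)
The paper itself offers no proof of this statement---it is imported verbatim from Elkies and used as a black box---so the only thing to assess is whether your sketch would stand on its own. The purely lattice-theoretic part of your proposal is correct and complete: if $e\in\Lambda$ has $\norm{e}^2=1$, self-duality gives $\Lambda=\Z e\perp(\Lambda\cap e^\perp)$ with $\Lambda\cap e^\perp$ unimodular of rank $n-1$; characteristic vectors and hence the minimal characteristic squared norm are additive over orthogonal sums; so the induction goes through once one knows that the hypothesis forces a vector of squared norm $1$. This reduction is indeed how Elkies organizes his proof.

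The gap is exactly where you place it, and it is the entire content of the theorem. ``The simultaneous non-negativity of all $N_k$ and all $C_k$ should pin $\theta_\Lambda$ down to $\theta_3^n$'' is a hope, not an argument, and as formulated it aims at a harder computation than the one Elkies actually performs (your worry about tracking $n\bmod 24$ and cusp forms is a symptom of that). Elkies does not compare coefficients one by one. He uses that the modular forms for the theta group form the polynomial ring $\mathbb{C}[\theta_3,\Delta_8]$ with $\Delta_8=\tfrac1{16}\theta_2^4\theta_4^4=q-8q^2+\cdots$, writes $\theta_\Lambda=\sum_{j\ge0}a_j\,\theta_3^{\,n-8j}\Delta_8^{\,j}$, and applies the shadow transformation term by term: $\theta_3$ goes to $\theta_2=2q^{1/4}(1+\cdots)$ and $\Delta_8$ to a nonzero constant multiple of $(\theta_3\theta_4)^4=1+O(q)$, so the $j$-th term of $\theta_S$ begins at $q^{n/4-2j}$ with leading coefficient a nonzero multiple of $a_j$. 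Taking $j^\ast$ maximal with $a_{j^\ast}\neq0$, no other term can cancel the coefficient of $q^{n/4-2j^\ast}$; if $j^\ast\ge1$ this exhibits a shadow vector of squared norm $n/4-2j^\ast$, i.e.\ a characteristic vector of squared norm $n-8j^\ast<n$, contradicting the hypothesis. Hence $\theta_\Lambda=\theta_3^n$, so $N_1=2n>0$ and the induction closes. Only the non-negativity of a single shadow coefficient is used. Without this (or an equivalent) computation, your proposal establishes the reduction but not the theorem.
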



Our result can now be proven as follows.
Assume we are given a Gram matrix $G$ such that there is no unimodular matrix $U$ with $G = U^\intercal U$.
We can interpret $G$ as the Gram matrix of some lattice, for instance by considering the Cholesky decomposition $G = L^\intercal L$.
Since $G \neq U^\intercal U$, $\Lambda(L) \not\cong \Z^n$, and by Elkies, there is a characteristic vector $w \in \Lambda(L)$ with $\Vert w \Vert^2 < n$.
Our certificate will be the coefficient vector of $w$, i.e.\ the vector $z$ such that $w = Lz$.
In Lemma~\ref{lem:basis-check}, we will show that $z$ is independent of the matrix $L$, and it suffices to check the parity condition of a characteristic vector on the $n$ vectors of an arbitrary basis $B$, instead of all lattice vectors.
In Lemma~\ref{lem:bit-complexity}, we will show that the coefficients of $w$ appearing in $z$ cannot be too large, i.e.\ the bit complexity of the certificate is polynomially bounded by the input size.
\begin{lem}
\label{lem:basis-check}
Let $G \in \Z^{n \times n}$ be a symmetric, positive definite, and unimodular matrix.
We have 
\[
e_k^\intercal G e_k \equiv_2 e_k^\intercal G z, \quad \forall \, k=1,\dots,n,
\]
if and only if for every matrix $B$ with $B^\intercal B =  G$, the vector $w= Bz$ is a characteristic vector in the lattice $\Lambda(B)$.
\end{lem}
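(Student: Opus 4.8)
The plan is to notice that, once the condition defining a characteristic vector is rewritten in terms of the Gram matrix, it no longer refers to the basis $B$ at all. Fix any $B$ with $B^\intercal B = G$; such a $B$ exists by Cholesky decomposition, and since $G$ is unimodular, the lattice $\Lambda(B)$ is self-dual, so the notion of a characteristic vector applies. An arbitrary vector of $\Lambda(B)$ has the form $v = By$ with $y \in \Z^n$, and then $\skp{v, Bz} = y^\intercal B^\intercal B z = y^\intercal G z$ while $\skp{v,v} = y^\intercal G y$. Hence ``$w = Bz$ is a characteristic vector of $\Lambda(B)$'' is equivalent to the purely arithmetic assertion
\[
y^\intercal G z \equiv_2 y^\intercal G y \qquad \text{for all } y \in \Z^n,
\]
which does not involve $B$. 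So it suffices to show that this assertion is equivalent to its specialization to $y = e_1,\dots,e_n$, namely $e_k^\intercal G z \equiv_2 e_k^\intercal G e_k$ for all $k$.

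One implication is immediate: putting $y = e_k$ in the displayed condition gives exactly the $k$-th diagonal congruence. For the converse, I would reduce the general $y$ to the diagonal case by bilinearity modulo $2$ together with the identity $m^2 \equiv_2 m$ for integers $m$. Writing $y = \sum_k y_k e_k$, linearity in the first slot gives $y^\intercal G z = \sum_k y_k (e_k^\intercal G z) \equiv_2 \sum_k y_k (e_k^\intercal G e_k) = \sum_k y_k G_{kk}$, using the hypothesis. Expanding the quadratic form on the other side,
\[
y^\intercal G y = \sum_k y_k^2 G_{kk} + 2 \sum_{i<j} y_i y_j G_{ij} \equiv_2 \sum_k y_k^2 G_{kk} \equiv_2 \sum_k y_k G_{kk}.
\]
Thus both $y^\intercal G z$ and $y^\intercal G y$ are congruent modulo $2$ to $\sum_k y_k G_{kk}$, hence to each other, which is the desired condition for all $y$.

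Assembling the pieces: if the diagonal congruences hold, then the arithmetic condition holds for every $y \in \Z^n$, so for every admissible $B$ and every $v = By \in \Lambda(B)$ we get $\skp{v, Bz} \equiv_2 \skp{v,v}$, i.e.\ $w = Bz$ is characteristic; conversely, if $w = Bz$ is characteristic for even one admissible $B$ (and at least one exists, by Cholesky), then specializing to $v = Be_k$ recovers the $k$-th diagonal congruence. I do not expect a genuine obstacle here; the only points requiring a little care are the mod-$2$ bilinear bookkeeping — in particular that the off-diagonal terms of $y^\intercal G y$ carry a factor $2$ and that $y_k^2 \equiv_2 y_k$ — and the opening observation that the condition is truly independent of the chosen $B$, which is precisely what makes the certificate $z$ in the main proof well defined.
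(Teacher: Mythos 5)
Your proof is correct and follows essentially the same route as the paper's: one direction by specializing to $v = b_k$, the other by bilinearity modulo $2$, the vanishing of the doubled off-diagonal terms of $y^\intercal G y$, and the identity $m^2 \equiv_2 m$. Your preliminary remark that the characteristic condition depends only on $G$ (not on the particular $B$ with $B^\intercal B = G$) is a nice explicit framing of what the paper leaves implicit.
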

\begin{proof}
Let $x = \sum_{i=1}^n \alpha_i b_i \in \Lambda$ be any vector, and $w = Bz \in \Lambda$.

If $w$ is a characteristic vector, we have 
$e_k^\intercal G e_k = \skp{b_k,b_k} \equiv_2 \skp{b_k, w} = e_k^\intercal z$.

For the other direction, we find
\begin{align*}
\skp{x, w - x} &= \skp{ \sum_{i=1}^n \alpha_i b_i, w - \sum_{i=1}^n \alpha_i b_i} = \skp{\sum_{i=1}^n \alpha_i b_i, v} - \sum_{i} \sum_{j} \alpha_i \alpha_j \skp{b_i,b_j} \\
	&\equiv_2 \sum_{i} \alpha_i (e_i^\intercal G z)) - \sum_i \alpha_i^2 (e_i^\intercal G e_i) \equiv_2 \sum_i (\alpha_i - \alpha_i^2) e_i^\intercal G e_i \\
	&\equiv_2 0,
\end{align*}
showing that $w$ is indeed a characteristic vector.
\end{proof}
\begin{lem}
\label{lem:bit-complexity}
Let $G \in \Z^{n \times n}$ be a symmetric, positive definite, and unimodular matrix, and $z \in \Z^n$ such that $z^\intercal G z \leq n$ and  
\[
e_k^\intercal G e_k \equiv_2 e_k^\intercal G z, \quad \forall \, k=1,\dots,n.
\]
Then the bit complexity of $z$ is polynomially bounded by the bit complexity of $G$.
\end{lem}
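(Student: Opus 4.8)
The plan is to ignore the parity condition altogether --- it plays no role in this bound --- and to extract everything from the inequality $z^\intercal G z \leq n$ together with the fact that $G$ is an integral, positive definite, unimodular matrix, so that $\det G = 1$. The strategy is: bound $\norm{z}$ from above by the smallest eigenvalue of $G$, then bound that eigenvalue from below using the determinant.

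First I would let $0 < \lambda_{\min} = \lambda_1 \leq \dots \leq \lambda_n = \lambda_{\max}$ be the eigenvalues of the symmetric positive definite matrix $G$. From $z^\intercal G z \geq \lambda_{\min} \norm{z}^2$ and the hypothesis $z^\intercal G z \leq n$ one gets $\norm{z}^2 \leq n/\lambda_{\min}$, so the whole problem reduces to lower bounding $\lambda_{\min}$. This is where unimodularity enters: since $\prod_{i=1}^n \lambda_i = \det G = 1$ and all $\lambda_i>0$, we have $\lambda_{\min} = \bigl(\prod_{i=2}^n \lambda_i\bigr)^{-1} \geq \lambda_{\max}^{-(n-1)}$, while $\lambda_{\max} \leq \sum_{i=1}^n \lambda_i = \tr G \leq n \max_i G_{ii}$. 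Combining the three estimates yields
\[
\norm{z}^2 \;\leq\; \frac{n}{\lambda_{\min}} \;\leq\; n\,\lambda_{\max}^{\,n-1} \;\leq\; n\,(\tr G)^{n-1}.
\]

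It then remains only to translate this into a bit-complexity statement. Taking logarithms, $\log \norm{z}_\infty \leq \log \norm{z} \leq \tfrac12 \log n + \tfrac{n-1}{2}\log(\tr G)$, and since $\tr G \leq n \max_i G_{ii}$, the right-hand side is bounded by a polynomial in $n$ and $\log \max_{i,j} |G_{ij}|$, hence by a polynomial in the bit complexity of $G$. As $z \in \Z^n$, its bit complexity is $\mathcal{O}\bigl(n\log(\norm{z}_\infty + 1)\bigr)$, which is therefore polynomially bounded in the bit complexity of $G$, as claimed.

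The only point requiring a little care is the lower bound on $\lambda_{\min}$: a priori the smallest eigenvalue of a positive definite integer matrix can be arbitrarily small, and it is precisely the constraint $\det G = 1$ --- which forces the product of the remaining eigenvalues to be small whenever $\lambda_{\max}$ is large --- that keeps $\lambda_{\min}$ at least $(\tr G)^{-(n-1)}$ and makes the argument go through.
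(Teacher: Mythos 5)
Your proposal is correct and follows essentially the same route as the paper: both arguments bound $\norm{z}^2$ by $n/\lambda_{\min}$ and then use $\det G = 1$ together with an upper bound on $\lambda_{\max}$ (you use $\tr G$, the paper uses $n\max_{i,j}|G_{ij}|$) to get $\lambda_{\min} \geq \lambda_{\max}^{-(n-1)}$. The observation that the parity condition is irrelevant to the bound is also consistent with the paper's proof, which never uses it.
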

\begin{proof}
Let $M \in \Z$ be a bound on the entries of $G$, i.e.\ $| G_{i,j} | \leq M$ for all $i,j$ in range, and let $v_1,\dots,v_n$ be an orthonormal basis of eigenvectors with eigenvalues $0 \leq \lambda_1 \leq \dots \leq \lambda_n \leq nM$.
Since $\prod_{i=1}^n \lambda_i = \det (G) = 1$, the inequality $\lambda_1 \geq 1/(nM)^{n-1}$ holds.
Writing $z = \sum_{i=1}^n \alpha_i v_i$, we estimate
$
n \geq z^\intercal G z = \sum_{i=1}^n \alpha_i^2 \lambda_i,
$
implying $\alpha_i^2 \leq \frac{n}{\lambda_i} \leq (nM)^n$.
Thus, $\norm{z}^2 \leq (nM)^{n+1}$, and since $z \in \Z^n$, its bit-complexity is bounded by $\mathcal{O} (n (\log (n) + \log(M) ))$.
\end{proof}
\begin{thm}
\label{thm:co-np}
The Unimodular Decomposition Problem (and thus the Rotated Standard Lattice Problem) is in NP $\cap$ co-NP.
\end{thm}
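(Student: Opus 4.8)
The plan is to exhibit polynomial-size, polynomial-time-checkable certificates for both the ``yes'' and the ``no'' instances of UDP. For the ``yes'' case nothing new is needed: as already observed, a unimodular $U$ with $G = U^\intercal U$ is itself a certificate, its entries are bounded by $\max_k \sqrt{G_{kk}}$, and verifying $G = U^\intercal U$ together with $\det(U) = \pm 1$ takes polynomial time. Hence UDP is in NP, and it remains to show it is in co-NP.

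For a ``no'' instance, the certificate will be a vector $z \in \Z^n$, and the verifier will check the two conditions $e_k^\intercal G e_k \equiv_2 e_k^\intercal G z$ for $k = 1,\dots,n$, and $z^\intercal G z < n$; both are clearly polynomial in the bit size of $G$ and of $z$. I first argue \emph{soundness}. Suppose $G = U^\intercal U$ for some unimodular $U$, yet some $z$ passes both checks. Since $U^\intercal U = G$, Lemma~\ref{lem:basis-check} says that $w \df Uz$ is a characteristic vector of $\Lambda(U)$; but $\Lambda(U) = \Z^n$ because $U$ is unimodular, so by point $v)$ of the last lemma $\norm{w}^2 \geq n$, i.e.\ $z^\intercal G z \geq n$ --- contradicting the second check. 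Thus no valid certificate exists for ``yes'' instances.

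Next, \emph{completeness}. Assume there is no unimodular $U$ with $G = U^\intercal U$, and take the Cholesky factor $L$ with $L^\intercal L = G$. Then $\Lambda(L)$ is a unimodular lattice, and $\Lambda(L) \not\cong \Z^n$, since an isomorphism would, via Lemma~\ref{lem:gram-iso}, produce a unimodular $U$ with $G = U^\intercal U$. Elkies' theorem therefore supplies a characteristic vector $w \in \Lambda(L)$ with $\norm{w}^2 < n$; write $w = Lz$ with $z \in \Z^n$. Applying Lemma~\ref{lem:basis-check} to the basis $L$, the parity conditions $e_k^\intercal G e_k \equiv_2 e_k^\intercal G z$ hold, and $z^\intercal G z = \norm{w}^2 < n$, so $z$ passes both checks. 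Moreover, Lemma~\ref{lem:bit-complexity} bounds the bit complexity of $z$ by $\mathcal{O}(n(\log n + \log M))$, which is polynomial in the input size. Hence $z$ is a valid certificate, and UDP is in co-NP.

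Finally, the reduction established in the Preliminaries --- $\Lambda(B) \cong \Z^n$ if and only if the Gram matrix $G = B^\intercal B$ admits a unimodular decomposition, together with the fact that every symmetric positive definite unimodular integer matrix arises as such a Gram matrix --- transfers both membership statements to the Rotated Standard Lattice Problem. The only genuinely hard ingredient is Elkies' theorem, which is used as a black box; the one point that requires care is that the verifier must decide correctness of the certificate \emph{without} computing any Cholesky decomposition and with a size bound on $z$, and this is exactly what Lemmas~\ref{lem:basis-check} and~\ref{lem:bit-complexity} provide.
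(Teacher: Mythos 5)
Your proof is correct and takes essentially the same route as the paper: the unimodular factor $U$ serves as the NP certificate, and the coefficient vector $z$ of a short characteristic vector (supplied by Elkies' theorem and checked via Lemmas~\ref{lem:basis-check} and~\ref{lem:bit-complexity}) serves as the co-NP certificate. If anything, you are more careful than the paper's own two-line argument, since you explicitly verify soundness --- that a yes-instance admits no valid ``no'' certificate, via the fact that every characteristic vector of $\Z^n$ has squared norm at least $n$ --- a step the paper leaves implicit.
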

\begin{proof}
If $G$ is a yes-instance, the certificate is the unimodular matrix $U$ such that $U^\intercal U = G$.
Since $G_{i,i} = U_i^\intercal U_i$, where $U_i$ is the $i$-th column, all entries in $U$ are bounded by $\max \{\sqrt{ G_{ii}} \mid i=1,\dots,n\}$, hence the encoding size is polynomial in the input, and verifying the certificate takes polynomial time.

If $G$ is a no-instance, 
by Elkies' Theorem and Lemma~\ref{lem:basis-check}, there exists a vector $z \in \Z^n$ corresponding to a short characteristic vector.
By Lemmas~\ref{lem:basis-check} and~\ref{lem:bit-complexity}, we can verify that $z$ is a certificate in polynomial time.
\end{proof}

\section{Conclusion}
We have seen that characteristic vectors are well suited as a certificate for the RSLP, and their coefficient vectors for the UDP.
If $\Lambda$ is given by its basis, it follows from the discussions that the problem at hand can be solved by computing the co-set of characteristic vectors, together with a single call to an oracle for the Closest Vector Problem (CVP), computing the shortest among all characteristic vectors.

However, CVP is NP-hard, and the best known running time using this reduction we are aware of is $2^{\mathcal{O}(n)}$~(see e.g.~\cite{aggarwal2018just}).

Another easy approach to the RSLP is the following.
If a lattice admits an orthogonal basis $\{b_1,\dots,b_n\}$,
then any basis reduced in the sense of Hermite, Korkine and Zolotareff~\cite{korkine1873sur, hermite1850extraits} is an orthogonal basis.
Due to the recursive structure of those bases, $n$ calls to an oracle for the Shortest Vector Problem (SVP) are sufficient to find this basis.
Hence the best known running time using this reduction is $2^{\mathcal{O} (n)}$~(see e.g.~\cite{aggarwal2018just}), but allows to find general \emph{orthogonal} bases.

However, as this problem is -- from a complexity point of view -- supposedly easier than SVP or CVP, it is of great interest to see a smarter algorithm.
It seems to be a reappearing phenomenon that if a class $\mathcal{C}$ of lattices allows to solve lattice problems such as $SVP$ or $CVP$ fast, a certain representation is needed.
For instance, if $\Lambda$ is a lattice of Voronoi's first kind, we still need to know an obtuse superbasis before using the polynomial time algorithm in~\cite{mckilliam2014finding}.
The only exception the author is aware of are construction-A lattices~\cite{kartik2016constr-A}.
Therefore, being able to detect orthonormal, or even orthogonal bases is an interesting open problem.


\subsection*{Acknowledgements}
This work was supported by the Swiss National Science Foundation (SNSF) within the project \emph{Lattice Algorithms and Integer Programming (Nr. 185030)}.

The author thanks Georg Loho and Matthias Schymura for helpful discussions, and his advisor Fritz Eisenbrand, who brought the problem to the author's attention.




		\bibliographystyle{acm}
		\bibliography{bibliography}


\end{document}